\newtheorem{theorem}{Theorem}
 \numberwithin{equation}{section}
\def\bea{\begin{eqnarray*}}
\def\eea{\end{eqnarray*}}
\def\disp{\displaystyle}
\def\bq{{\bf q}}
\def\bv{{\bf v}}
\def\bn{{\bf n}}
\begin{document}

%
%
%
%
%
%
%
%
%

\title[Phase separation in quasi incompressible fluids]
 {Phase separation in quasi incompressible fluids: Cahn--Hilliard model in the Cattaneo--Maxwell
framework}

\author[A. Berti]{Berti Alessia}
\address{Faculty of Engineering\\ 
University e-Campus\\
Via Isimbardi 10\\ 
22060 Novedrate (CO), Italy.}
\email{alessia.berti@ing.unibs.it}%
\author[I. Bochicchio]{Ivana Bochicchio}%
\address{Department of Mathematics\br
University of Salerno \br
Via Ponte Don Melillo\br 
84084 Fisciano (SA), Italy.}
\email{ibochicchio@unisa.it}
\author[M. Fabrizio]{Mauro Fabrizio}%
\address{Department of Mathematics\\
University of Bologna\\
Piazza Porta San Donato 5\\ 
40126 Bologna, Italy.}
\email{fabrizio@dm.unibo.it}

\subjclass{74A50; 80A17}

\keywords{Cahn--Hilliard equation; Cattaneo Maxwell equation; Quasi incompressible fluids; Non-isothermal phase
separation; Phase-field}

\date{ }

\begin{abstract}
In this paper we propose a mathematical model of phase separation for a quasi-incompressible binary mixture where the spinodal decomposition is induced by an heat flux governed by the Cattaneo--Maxwell equation. 
As usual, the phase separation is considered in the framework of phase field modeling so that the transition is described by an additional field, the concentration $c$.
The evolution of concentration is described by the Cahn--Hilliard equation and in our model is coupled with the Navier--Stokes equation.
Since thermal effect are included, the whole set of evolution equations is set up for the velocity, the concentration, the temperature and the heat flux. The model is compatible with thermodynamics and a maximum theorem holds.
\end{abstract}

\maketitle

\section{Introduction}
The mechanism by which a mixture of two or more components can spontaneously separate into distinct regions (or phases) with different chemical compositions and physical properties is usually named spinodal decomposition or phase separation.
This phenomenon has been widely studied with phase field approach (see for instance \cite{CH,FGM,FGM_physicaD,gurtin,LT} and reference therein), in that the interface between the two pure phases is not sharp and it is replaced by a narrow diffuse
layer across which the fluids may mix.
If we denote with $c$ the concentration of the components, its evolution is given by the Cahn-Hilliard equation:
$$
\dot c = \nabla \cdot [M(c) \nabla \mu] \ ,
$$
where $M(c)$ represents the mobility and $\mu$ is the chemical potential depending on the state variables.

The phase separation can be induced by many factors. Typically it takes place when the mixture is quickly cooled below a critical value of the temperature where the mixture can no longer exist in equilibrium in its homogeneous state (\cite{FGM}). 
Even the velocity can influence the miscibility properties of the mixture (see \cite{BBG,BB,FGM,LT}).

In our paper, we suppose that the heat flux $\bq$ can induce the spinodal decomposition. Indeed, an increase of $\bq$, like as an increase in the temperature, reduce the miscibility gap. So we let the chemical potential depend on $\bq$. 
In order to describe the evolution of the system, we couple the kinetic equations involving the state variables with a suitable law for the heat flux. In particular, we assume that $\bq$ obeys a (modified) Cattaneo-Maxwell equation (see \cite{cattaneo,CJ,F1,F2}).
It plays a crucial role in proving the thermodynamically consistence of our model, which is not guaranteed with a constitutive law of Fourier type.

The paper is organized as follows. In Section 2 we introduce the order parameter $c$ and, following \cite{LT}, we model the system as a quasi-incompressible binary mixture. The assumption of quasi-incompressibility means that both components are incompressible with different density, but, due to variations of the order parameter, the density of the mixture is not constant and the velocity may not be non-solenoidal. 
In Section 3 we write the evolution equations for the state variable (the order parameter, the velocity, the absolute temperature and the heat flux). Section 4 is devoted to establish the restrictions imposed on the material parameters by the principles of thermodynamics. Finally, in Section 5 we prove a maximum theorem for the order parameter, so that $c$ is always defined into the interval $[-1,1]$.

\section{Preliminaries}
We consider a binary mixture of two incompressible non-reacting fluids, occupying a fixed domain $\Omega \subset \mathbb{R}^{3}$ with a smooth boundary
$\partial \Omega$. In the following the fluids are labeled by $i=1,2$. 
Each component is characterized by its own intrinsic constant density $\rho_{i0}$ under standard conditions of temperature and pressure. We suppose that $\rho_{10} \neq \rho_{20}$.

The total mass and the density of the mixture are denoted respectively by $M$ and $\rho $, namely 
\begin{equation*}
M=\int_{\Omega }\rho \,dx.
\end{equation*}
Let $M_{1}$, $M_{2}$ be the masses of each species in $\Omega $, so that $
M=M_{1}+M_{2}$. We denote by $\rho _{1}$ and $\rho _{2}$ the apparent
densities of the two constituents, such that $\rho =\rho _{1}+\rho _{2}$. The adjective ``apparent'' is used to emphasize that we are considering the ratio of each mass
fraction over the total volume element, rather than over its own fractional volume. Accordingly, the ratio $\rho_i/\rho_{i0}$ denotes the the volume fraction of the substance $i$ and hence the following equality holds:
\begin{equation}\label{vol_frac}
	\frac{\rho_1}{\rho_{10}} + \frac{\rho_2}{\rho_{20}} =1.
\end{equation}

Denoting by $\bv_i$ the velocity of the $i$ fluid, the mean velocity $\mathbf{v}$ is defined by 
\begin{equation*}
\rho\mathbf{v}= \rho_1\mathbf{v}_{1}+\rho_{2}\mathbf{v}_{2}.
\end{equation*}

In order to derive the diffuse interface model, we introduce an order parameter measuring the degree of phase separation, {\it e.g.} 
\begin{equation*}
c= c_1 -c_2 =\frac{\rho _{1}-\rho _{2}}{\rho},
\end{equation*}
where $c_i = \rho_i/\rho$ denotes the mass concentration of the fluid $i$. The equality $\rho= \rho_1+\rho_2$ leads to
\begin{equation}\label{c12}
	c_1 = \frac{1+c}{2}, \quad c_2 = \frac{1-c}{2}.
\end{equation}
From the definition of $c$, it is apparent that $c\in \lbrack -1,1]$. In particular, $c=-1$
(or $c=1$) wherever only the component $1$ (or $2$) occurs. In contrast with two fluids models, in the diffusive approach the fundamental fields of the model are $\rho$, $\bv$, $c$, rather than $\rho_1$, $\rho_2$, $\bv_1$, $\bv_2$ (see \cite{morro}). 

In our paper, we are interested in modeling quasi-incompressible fluids, that is we assume that both constituents are incompressible, but the density of the mixture may not be constant and change owing to variations in the concentration parameter $c$. For this reason, the density $\rho $ is no more an
independent variable, but it is a function of $c$. In particular, \eqref{vol_frac} and \eqref{c12} implies
\begin{equation*}
\frac{1}{\rho }=\frac{1+c}{2}\cdot \frac{1}{\rho _{10}}+\frac{1-c}{2}\cdot 
\frac{1}{\rho _{20}},
\end{equation*}
which implies 
\begin{equation}
\rho \,=\,\frac{2\rho _{10}\rho _{20}}{(\rho _{10}+\rho _{20})-c(\rho_{10}-\rho _{20})}.  \label{ro}
\end{equation}
The assumption $\rho_{10} \neq \rho _{20}$ assures that the density is not constant. Accordingly, the velocity $\mathbf{v}$ is not solenoidal and satisfies the
continuity equation 
\begin{equation}
\dot{\rho}=-\rho \nabla \cdot \mathbf{{v}.}  \label{balance1}
\end{equation}
Since $\rho $ is a function only of $c$, $\dot{c}$ is related to $\nabla \cdot \mathbf{v}$ by the relation
\begin{equation} 
\rho_{c}\,\dot{c}=-\rho \nabla \cdot \mathbf{{v}.}  \label{con}
\end{equation}
From \eqref{ro} it follows that the derivative $\rho_{c}$ is given
by 
\begin{equation} \label{ro-der}
\rho_{c} =
\frac{2\rho _{10}\rho _{20}(\rho _{10}-\rho _{20})}{[(\rho _{10}+\rho _{20})-c(\rho _{10}-\rho _{20})]^{2}}.
\end{equation}

During the process of phase separation, the two components can separate into distinct regions with different chemical compositions, but the total mass $M_1$, $M_2$, of the two species remain constant, that is
\begin{equation}\label{balance_12}
	\partial_t\rho_i + \nabla \cdot(\rho_i\bv_i)=0, \quad i=1,2.
\end{equation}
Equations \eqref{balance_12} are equivalent, in a two fluids model, to the balance of the overall mass \eqref{balance1} and 
the balance of the order parameter
\begin{equation}
\rho \dot{c}\,=\,\nabla \cdot \mathbf{j}\,,  \label{(2.6)}
\end{equation}
where the vector $\mathbf{j}$ is a suitable flux (see \cite{FGM,morro}) satisfying the boundary condition
\begin{equation}
\mathbf{j}\cdot \mathbf{n}=0 \quad \text{at}\quad \partial \Omega,
\label{(2.8)}
\end{equation}
where $\bn$ denotes the unit outward normal vector.
As a consequence, the global mass of the mixture is conserved, namely
\begin{equation}
\frac{d}{dt}\int_{\Omega }\rho c\,dv = \int_{\Omega }\rho \dot c\,dv = \int_{\partial\Omega }\mathbf{j}\cdot \mathbf{n}\,da =0.  \label{(2.5)}
\end{equation}

As customary, we regard $\mathbf{j}$ as a constitutive function of $\rho$, $\bv$, $c$ (and their gradients).
In particular, $\mathbf{j}$ is assumed to be proportional to the gradient of the generalized chemical potential $\mu $, \textit{i.e.} 
\begin{equation*}
\mathbf{j}\,=\,M(c)\nabla \mu
\end{equation*}
where $M(c)$ denotes the diffusive mobility, which is a non-negative function eventually depending on the concentration $c$, while $\mu $ is the
classical chemical potential.



\section{Evolution equations}

This section is devoted to recall the evolution equations for the fields of
our model by the following balance equations 
\begin{eqnarray}
\rho \dot{\mathbf{v}} &=&\nabla \cdot \mathbf{T}+\rho \mathbf{b}  \label{3.1}
\\
\rho \dot{c} &=&\nabla \cdot \lbrack M(c)\nabla \mu ]  \label{3.2}
\end{eqnarray}%
where $\mathbf{T}$ is the stress tensor which depends on the symmetrical
part $\mathbf{D}$ of the gradient of velocity, the concentration $c$ and its
gradient $\nabla c$, while $\mathbf{b}$ denotes the body force density. So
that, we assume that $\mathbf{T}$ is given by the sum of two second-order
tensors, \textit{i.e.} 
\begin{equation*}
\mathbf{T}(\mathbf{D},c,\nabla c)=\mathbf{T}_{1}(\mathbf{D},c)+\mathbf{T}%
_{2}(c,\nabla c),\qquad \mathbf{D}=\frac{1}{2}[\nabla \mathbf{v}+(\nabla 
\mathbf{v})^{T}].
\end{equation*}
The first term is related to the classical Cauchy stress tensor for a
viscous fluid, that is 
\begin{equation*}
\mathbf{T}_{1}(\mathbf{D},p,\rho ,c)=-p(c)\mathbf{1}+2\nu {(c)}\mathbf{D}%
+\sigma (c)(\nabla \cdot \mathbf{v})\mathbf{1},
\end{equation*}
where $\mathbf{1}$ stands for the second-order identity tensor, $\nu (c)$
and $\sigma (c)$ denote the viscosity coefficients of the mixture. In
particular, when $c=1$ (or $c=-1$) $\nu $ and $\sigma $ coincide with the
viscosity of the fluid 1 (or 2). Here, since the density $\rho ~$depends
on $c$, we let the pression $p$ be a function of $c$. This view point is
well described in \cite{FGM}, where it is shown the relevant changes which
occur if, as in \cite{LT}, $p$ is regarded as a unknown function.

The tensor $\mathbf{T}_{2}$ accounts for the capillary forces due to surface
tension and it is associated to the gradient of the concentration (see 
\textit{e.g.} \cite{gurtin}), \textit{i.e.} 
\begin{equation*}
\mathbf{T}(c,\nabla c)=-\gamma \rho (c)\nabla c\otimes \nabla c,
\end{equation*}%
where the parameter $\gamma $ is related to the thickness of the interfacial
region.

As a consequence, the stress tensor is given by 
\begin{equation*}
\mathbf{T}=-p(c)\mathbf{1}-\gamma \rho \nabla c\otimes \nabla c+2\nu {(c)}%
\mathbf{D}+\sigma (c)(\nabla \cdot \mathbf{v})\mathbf{1}
\end{equation*}
and the linear momentum balance equation reads 
\begin{equation}
\rho \dot{\mathbf{v}}=-\nabla p(c)-\gamma \nabla \cdot (\rho \,\nabla
c\otimes \nabla c)+\nabla \cdot (2\nu (c)\mathbf{D})+\nabla (\sigma
(c)\nabla \cdot \mathbf{v})+\rho \mathbf{b}.  \label{v}
\end{equation}

Now we focus our attention on the diffusion equation 
\begin{equation*}
\rho \dot{c}=\nabla \cdot \lbrack M(c)\nabla \mu ].
\end{equation*}%
Here, we consider a generalization of the chemical potential by assuming
that $\mu $ depends by the concentration $c$ and its gradient $\nabla c$,
the absolute temperature $\theta $ and the heat flux $\mathbf{q}$. The
underlying physical idea is that heat flux can influence the miscibility
properties of the mixture, namely an increase in the heat flux (like an
increase in the temperature) improves the miscibility of the mixture.
Accordingly, we suppose that $\mu $ is defined as 
\begin{equation}\label{mi}
\mu =-{\frac{\gamma }{\rho }}\nabla \cdot (\rho \nabla c)+\,\theta
_{0}F_c(c)+\,\,\left[ \theta +\frac{1}{\kappa _{0}}|\mathbf{q}|^{2}%
\right] G_c(c),  
\end{equation}
where $\theta _{0},\kappa _{0}$ are positive constants and $F,G$ are
suitable functions depending only on $c$ and whose expression will be given in the sequel.
With this choice, the evolution equation for the concentration is given by 
\begin{equation}\label{c}
\rho \dot{c}
=
\nabla \cdot \left[ M(c)\nabla \left( -\frac{\gamma }{\rho} \nabla \cdot (\rho \nabla c)+\,\theta _{0}F_c(c)+\,\,\left[ \theta +
\frac{1}{\kappa _{0}}|\mathbf{q}|^{2}\right] G_c(c)\right) \right]
\end{equation}

In order to obtain the equation for the temperature, let us consider the
first law of thermodynamics in the form 
\begin{equation}
\rho \dot{e}={\mathcal{P}}_{m}^{i}+{\mathcal{P}}_{c}^{i}+{\mathcal{P}}%
_{h}^{i},  \label{Ilaw}
\end{equation}%
where $e$ the internal energy, which we suppose function of the variables $%
\theta ,c,\nabla c,\mathbf{q}$, ${\mathcal{P}}_{m}^{i}$ the internal
mechanical power, ${\mathcal{P}}_{c}^{i}$ the internal chemical power, while 
${\mathcal{P}}_{h}^{i}\,=\,\rho h$ is the internal heat power and $h$ is
the rate at which heat is absorbed per unit mass (see for instance \cite{fremond}). Denoting by $T=\frac{1}{2}%
\rho \mathbf{v}^{2}$ the kinetic energy and $E$ the total energy, we write $%
E=T+e$.

By multiplying equation \eqref{v} by $\mathbf{v}$, we obtain the mechanical
power balance, that is 
\begin{equation*}
\rho \dot{T}+{\mathcal{P}}_{m}^{i}={\mathcal{P}}_{m}^{e},
\end{equation*}
with ${\mathcal{P}}_{m}^{e}$ the external mechanical power, defined 
\begin{eqnarray}
{\mathcal{P}}_{m}^{i} =-p\nabla \cdot \mathbf{v}+
\nu(c)\mathbf{D}^{2}
+\sigma (c)(\nabla \cdot \mathbf{v})^{2}+\gamma \rho (\nabla c\otimes
\nabla c)\cdot \nabla \mathbf{v}  \label{Pmi} \\
&&  \notag \\
{\mathcal{P}}_{m}^{e} =\nabla \cdot \lbrack -p\mathbf{v}+2\nu (c)\mathbf{D}%
\mathbf{v}-\gamma \rho (\nabla c\otimes \nabla c)\mathbf{v}+\sigma (c)%
\mathbf{v}\nabla \cdot \mathbf{v}]+\rho \mathbf{b}\cdot \mathbf{v}
\label{Pme}
\end{eqnarray}

Similarly, multiplying equation \eqref{c} by $\rho \dot{c}$, we obtain the
power balance related to the concentration $c$, that is 
\begin{equation*}
{\mathcal{P}}_{c}^{i}={\mathcal{P}}_{c}^{e},
\end{equation*}%
where ${\mathcal{P}}_{c}^{e}$ is the external chemical power, such that 
\begin{eqnarray}
{\mathcal{P}}_{c}^{i} &=&
\rho \theta _{0}\dot{F}(c)+\rho \dot{G}(c)\left[
\theta +\frac{1}{\kappa _{0}}|\mathbf{q}|^{2}\right] +\rho \gamma \nabla
c\cdot \nabla \dot{c}+M(c)|\nabla \mu |^{2},  
\label{chem} 
\\
{\mathcal{P}}_{c}^{e} &=&
\nabla \cdot \lbrack \gamma \rho \dot{c}\nabla
c+M(c)\mu \nabla \mu ].
\end{eqnarray}%
\newline
Adding ${\mathcal{P}}_{m}^{i}$ and ${\mathcal{P}}_{c}^{i}$, we
obtain 
\begin{equation}
\begin{array}{lll}
&  & {\mathcal{P}}_{m}^{i}+{\mathcal{P}}_{c}^{i}  =  \rho \displaystyle\frac{d}{%
dt}\left[ \theta _{0}F(c)+\frac{1}{2}\gamma |\nabla c|^{2}\right] -p\nabla
\cdot \mathbf{v} \\ 
&  &  \\ 
& & +\sigma (c)(\nabla \cdot \mathbf{v})^{2}+\nu (c)\mathbf{D}^{2}+\rho %
\left[ \theta +\frac{1}{\kappa _{0}}|\mathbf{q}|^{2}\right] \dot{G}%
(c)+M(c)|\nabla \mu |^{2}%
\end{array}
\label{sommaP}
\end{equation}%
where we have used the identity 
\begin{equation*}
\dot{\overline{\nabla c}}\cdot \nabla c=\nabla c\cdot \nabla \dot{c}-(\nabla
c\otimes \nabla c)\cdot \nabla \mathbf{v}.
\end{equation*}%
From \eqref{con}, remembering that $\rho$ and $c$ are not independent variables, it follows that 
\begin{equation*}
-p\nabla \cdot \mathbf{v}\,=\,\rho \frac{p}{\rho^2} \dot{\rho}\,=\,\rho \frac{d}{dt}P(\rho)
%
\end{equation*}%
where $P_\rho \,=\,\frac{p}{\rho^2}$.
\newline
Hence, 
\begin{equation}
\begin{array}{lll}
& & \rho \dot{T}+{\mathcal{P}}_{m}^{i}+{\mathcal{P}}_{c}^{i}  =  \rho \displaystyle\frac{d}{%
dt}\left[ \frac{1}{2}\mathbf{v}^{2}+\theta _{0}F(c)+\frac{1}{2}\gamma
|\nabla c|^{2}]+P(\rho)\right] \\ 
&  &  \\ 
&  & +\sigma (c)(\nabla \cdot \mathbf{v})^{2}+\nu (c)\mathbf{D}^{2}+\rho %
\left[ \theta +\frac{1}{\kappa _{0}}|\mathbf{q}|^{2}\right] \dot{G}%
(c)+M(c)|\nabla \mu |^{2}.
\end{array}
\label{sommaP2}
\end{equation}
Moreover, \eqref{sommaP2} suggests to define the internal energy $e$ as 
\begin{equation}
e=e_{0}(\theta )+\theta _{0}F(c)+\frac{1}{2}\gamma |\nabla c|^{2}+P(\rho),  \label{internal_energy}
\end{equation}
where $e_{0}$ is a function depending only on the temperature. Then, the
total energy $E$ is given by 
\begin{equation*}
E=T+e=\frac{1}{2}\mathbf{v}^{2}+e_{0}(\theta )+\theta _{0}F(c)+\frac{1}{%
2}\gamma |\nabla c|^{2}+P(\rho).
\end{equation*}%
A comparison with \eqref{Ilaw} yields 
\begin{equation}
\begin{array}{ll}
\rho h & =\rho e_{0}^{\prime }(\theta )\dot{\theta}-\sigma (c)(\nabla \cdot 
\mathbf{v})^{2}-\nu (c)\mathbf{D}^{2} 
-\rho \left[ \theta + \disp \frac{1}{\kappa _{0}}|\mathbf{q}|^{2}\right] \dot{G}
(c)
\\
&
-M(c)|\nabla \mu |^{2}.
\end{array}
\label{h}
\end{equation}%
As well known (see \textit{e.g.} \cite{fremond}), the thermal balance law is
expressed by the following equation 
\begin{equation}
\rho h=-\nabla \cdot \mathbf{q}+\rho r.  \label{thermal_balance}
\end{equation}

In our model, the equation relating the heat flux $\mathbf{q}$ to the
gradient of the temperature, assumes the form of a generalized
Cattaneo--Maxwell equation 
\begin{equation}  \label{CMLaw}
-2 [\delta + G(c)]\dot{\mathbf{q}}=\mathbf{q} + \kappa(\theta) \nabla\theta
\end{equation}
where $\delta$ is a positive constant and $\kappa(\theta)$ denotes the
thermal conductivity. We suppose that the dependence of $\kappa$ on the
absolute temperature is given by 
\begin{equation*}
\kappa(\theta)\,=\,\frac{\kappa_0}{\theta}.
\end{equation*}
Notice that when $G(c)=0$, we recover the usual Cattaneo-Maxwell equation (\cite{cattaneo}).

Substituting equations \eqref{thermal_balance} and \eqref{CMLaw} into
\eqref{h}, we obtain the kinetic equation for the temperature 
\begin{eqnarray*}
&\rho e_{0}^{\prime }(\theta )\dot{\theta}-\sigma (c)(\nabla \cdot \mathbf{v%
})^{2}-\nu (c)\mathbf{D}^{2}-\rho \left[ \theta + \disp\frac{1}{\kappa _{0}}|%
\mathbf{q}|^{2}\right] \dot{G}(c) -M(c)|\nabla \mu |^{2} = \\
&=
\nabla \cdot \left[2(\delta+G(c)) \dot{q}\right]+\nabla \cdot \left[ \kappa (\theta)\nabla \theta \right] +\rho r.
\end{eqnarray*}
Collecting the previous results, we have the following equations
 
\begin{equation*}
\begin{split}
& \rho \dot{\mathbf{v}} =-\nabla p-\gamma \nabla \cdot (\rho \,\nabla
c\otimes \nabla c)+\nabla \cdot (2\nu (c)\mathbf{D})+\nabla (\sigma
(c)\nabla \cdot \mathbf{v})+\rho \mathbf{b} 
\\
&
\\
&\rho \dot{c} = \nabla \cdot \left[ M(c)\nabla \left( -{\frac{\gamma }{\rho }%
}\nabla \cdot (\rho \nabla c)+\,\theta _{0}F^{\prime }(c)+\,\,\left[ \theta +%
\frac{1}{\kappa _{0}}|\mathbf{q}|^{2}\right] G^{\prime }(c)\right) \right] 
\\
& 
\\
&\rho e_{0}^{\prime }(\theta )\dot{\theta} = \sigma (c)(\nabla \cdot \mathbf{%
v})^{2}+\nu (c)\mathbf{D}^{2}+\rho \left[ \theta +\frac{1}{\kappa _{0}}|%
\mathbf{q}|^{2}\right] \dot{G}(c)+M(c)|\nabla \mu |^{2} +\\
&\qquad \qquad + \nabla \cdot \left[2(\delta+G(c)) \dot{q}\right]+\nabla \cdot \left[ \kappa (\theta
)\nabla \theta \right] +\rho r
\\
&
\\
&-2[\delta +G(c)]\dot{\mathbf{q}}=\mathbf{q}+\kappa (\theta )\nabla \theta 
\end{split}
\end{equation*}%
\\
in the unknowns $\mathbf{v},c,\theta ,\mathbf{q}$, where $p=p(c)$ and $\rho $
is a function of $c$ whose expression is given in \eqref{ro}. To these
equations we append the boundary conditions 
\begin{equation*}
\mathbf{v}=\mathbf{0},\qquad \nabla \theta \cdot \mathbf{n}=0,\qquad \nabla
c\cdot \mathbf{n}=0,\qquad \nabla \mu \cdot \mathbf{n}=0\qquad \text{at}\
\partial \Omega .
\end{equation*}

\section{Thermodynamic restrictions}

In order to prove the thermodynamic consistence with our model, we write the
second law of thermodynamics in the Clausius--Duhem form: 
\begin{equation}
\rho \dot{\eta}\geq \rho \frac{h}{\theta }+\frac{1}{\theta ^{2}}\mathbf{%
q\cdot }\nabla \theta ,  \label{Clausius}
\end{equation}%
where $\eta $ is the entropy function.

We introduce the Helmholtz free energy density $\psi$ defined as 
\begin{equation*}
\psi = e - \theta \eta.
\end{equation*}
We suppose that $\psi$ depends on the variables $\theta, c,\nabla c, \mathbf{%
q}$. Inequality \eqref{Clausius} can be written as 
\begin{equation*}
\rho \dot\psi - \rho \dot e + \rho\dot \theta \eta + \rho h + \frac{1}{\theta%
} \nabla\theta \cdot \mathbf{q} \leq 0.
\end{equation*}

In view of \eqref{internal_energy}-\eqref{h} we have 
\begin{eqnarray}
&&
\rho \dot{\psi}+\rho \eta \dot{\theta}-\rho \frac{d}{dt}\left[ \theta
_{0}F(c)+\frac{1}{2}\gamma |\nabla c|^{2}+P(\rho)+\left[ \theta
+\frac{1}{\kappa _{0}}|\mathbf{q}|^{2}\right] G(c)\right] \leq   \notag
\label{CD} 
\\ 
&&
\sigma (c)(\nabla \cdot \mathbf{v})^{2}+\nu (c)\mathbf{D}^{2}-\rho \left[
\theta +\frac{1}{\kappa _{0}}|\mathbf{q}|^{2}\right] ^{\cdot}G(c)+M(c)|\nabla \mu |^{2} +
\\ \notag
&& -\frac{1}{\theta }\nabla \theta \cdot \mathbf{q}.
\end{eqnarray}
Moreover, from equation \eqref{CMLaw} we obtain 
\begin{equation}
\frac{1}{\theta }\nabla \theta \cdot \mathbf{q}=-\frac{2}{\kappa _{0}}%
[\delta +G(c)]\dot{\mathbf{q}}\cdot \mathbf{q}-\frac{1}{\kappa _{0}}|\mathbf{%
q}|^{2}\ .  \label{CM}
\end{equation}
A substitution into \eqref{CD} leads to 
\begin{eqnarray}
&&\rho \left( \psi_{\theta } +\eta \right) \dot{\theta}+\rho \left[\psi_{c} -\theta _{0}F_c(c)-\left( \theta +\displaystyle%
\frac{1}{\kappa _{0}}|\mathbf{q}|^{2}\right) G_c(c)-\frac{p}{\rho^2}\rho_c\right] \dot{c}  \notag  \label{CD_psi2} 
\\
&&
+\left[\psi_{\mathbf{q}} -\displaystyle\frac{2}{\kappa _{0}}
[\delta +G(c)]\mathbf{q}\right] \dot{\mathbf{q}}
+\rho \left(\psi_{\nabla c} -\gamma \nabla c\right) \cdot \dot{\overline{\nabla c}}
-\sigma (c)(\nabla \cdot \mathbf{v})^{2}  \notag \\
&&-\nu (c)\mathbf{D}^{2}-M(c)|\nabla \mu |^{2}-\frac{1}{\kappa _{0}}
\left\vert \mathbf{q}\right\vert ^{2}\leq 0.  \notag
\end{eqnarray}
In order to satisfy such an inequality, we require that 
\begin{eqnarray*}
\psi_{\theta }  &=&-\eta ,
\qquad 
\psi_{c} =\theta_{0}F_c(c)+\left(\theta +\displaystyle\frac{1}{\kappa_{0}}|\mathbf{q}|^{2}\right) G_c(c)+\frac{p}{\rho^2}\rho_c 
\\
\partial _{\mathbf{q}}\psi  &=&\displaystyle\frac{2}{\kappa _{0}}[\delta +G(c)]\mathbf{q},
\qquad 
\psi_{\nabla c} =\gamma \nabla c
\end{eqnarray*}%
\begin{equation*}
\nu (c),\sigma (c),M(c),\kappa _{0}\geq 0.
\end{equation*}%
Then 
\begin{eqnarray}
& \psi =
\theta _{0}F(c) +\left( \theta +\displaystyle\frac{1}{\kappa _{0}}|
\mathbf{q}|^{2}\right) G(c)+\frac{\gamma }{2}|\nabla c|^{2}+P+
\frac{\delta }{\kappa _{0}}|\mathbf{q}|^{2}+\psi _{0}(\theta ),
\label{free_energy} 
\\
& \notag
\\
& \eta  =-\psi_{\theta } =-G(c)-\psi _{0}^{\prime }(\theta ),
\end{eqnarray}
where $\psi _{0}$ is a suitable function (depending only on $\theta $) which
ensures the validity of the condition $\psi =e-\eta \theta =e+ \psi_{\theta}\theta $. A substitution of \eqref{internal_energy} and %
\eqref{free_energy} leads to the equality 
\begin{equation*}
\psi _{0}(\theta )=e_{0}(\theta )+\psi _{0}^{\prime }(\theta )\theta .
\end{equation*}%
Thus, $\psi _{0}$ is given by 
\begin{equation*}
\psi _{0}=\mathcal{C}\theta -\theta \int \frac{e_{0}(\theta )}{\theta ^{2}} \,d\theta ,
\end{equation*}%
with $\mathcal{C}>0$ and 
\begin{equation*}
\eta =-G(c)-\mathcal{C}+\int \frac{e_{0}(\theta )}{\theta ^{2}} \,d\theta + \frac{e_{0}(\theta )}{\theta }.
\end{equation*}
In particular, if we let $e_{0}=\mathcal{C}\theta $, where $\mathcal{C}$
denotes the specific heat, we recover the standard form of $\psi _{0}$ and $%
\eta $, \textit{i.e.} 
\begin{equation*}
\psi _{0}=\mathcal{C}\theta (1-\ln \theta ),\qquad \eta =-G(c)-\mathcal{C}%
\ln \theta .
\end{equation*}

\section{Maximum principle}

If we like that the Cahn--Hilliard equation describes a natural physical
problem, we have to prove a maximum theorem, namely we have to show that the
evolution equations imply that the concentration $c$ is always defined into
the interval $[-1,1]$. 
\newline
To this aim, remembering that the chemical potential is given by 
\begin{equation}\label{mu}
\mu = -\frac{\gamma }{\rho }\nabla \cdot (\rho \nabla c)+ \,\theta_{0}F_c(c)+\left[ \theta +\frac{1}{\kappa _{0}}|\mathbf{q}|^{2}\right] G_c(c),
\end{equation}
we can define $F$ and $G$ by letting 
\begin{equation}
F(c)=(c^{2}-1)^{2}\ ,\qquad c\in \mathbb{R}  \label{F}
\end{equation}
\begin{equation}
G(c)=\frac{1}{2}\left\{ 
\begin{array}{llll}
c^{2}\qquad  & -1\leq c \leq 1 &  &  
\\[0.5em]
1\qquad  & c<-1 \, \cup \, c>1 &  & 
\end{array}%
\right.   \label{G}
\end{equation}

\begin{figure}[ht]
\includegraphics[scale=0.25]{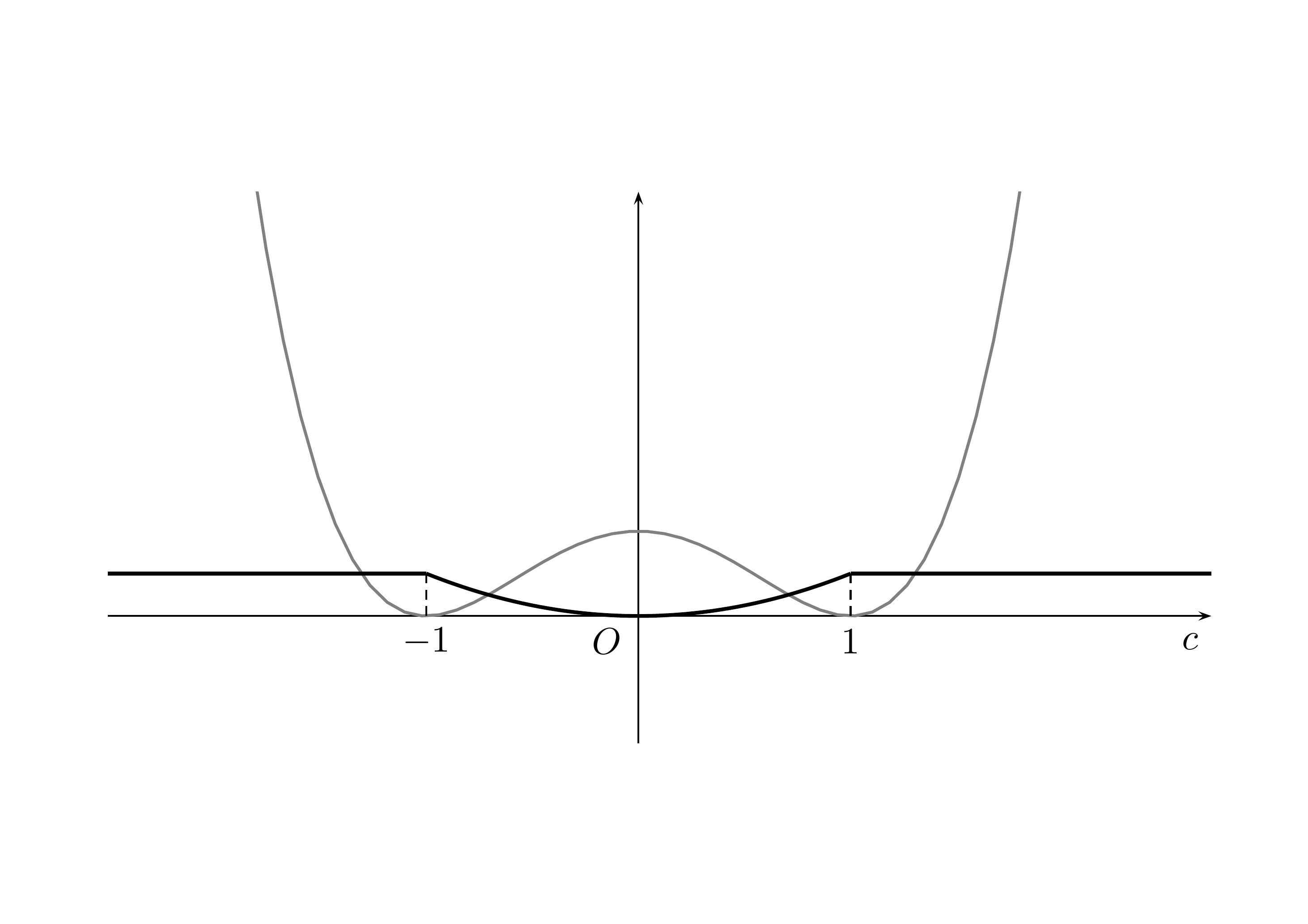}
\vskip -1cm
\caption{The plot of $F$ (gray line) and $G$ (black line).}
\end{figure}

\noindent
Hence $F\geq 0$ and $F$ vanishes only at $c=-1,1$. Moreover, by \eqref{F}--\eqref{G} we have 
\begin{equation}
F_c(c)=4c(c^{2}-1)\ ,\qquad c\in \mathbb{R}  \label{derF}
\end{equation}%
\begin{equation}
G_c(c)=\left\{ 
\begin{array}{llll}
c\qquad  & -1< c < 1 &  &  
\\[0.5em]
0\qquad  & c<-1 \, \cup \, c>1 &  & 
\end{array}%
\right.   \label{derG}
\end{equation}
We denote by $W$ the $c-$dependent part of the free energy, that is
$$
W(c) = \theta _{0}F(c) + u G(c),
\qquad
u= \theta +\displaystyle\frac{1}{\kappa _{0}}| \mathbf{q}|^{2}.
$$
The function $W$ has a unique minimum when $u \geq 4\theta_0$, while for $u < 4\theta_0$ it has two minima in $c_{\pm}$, with $|c_{\pm}|<1$ (see Fig.2). It is known \cite{CH} that the unique minimum in the potential corresponds to the situation without a miscibility gap, while in the regime with two minima there is a miscibility gap.
\begin{figure}[ht]
\includegraphics[scale=0.25]{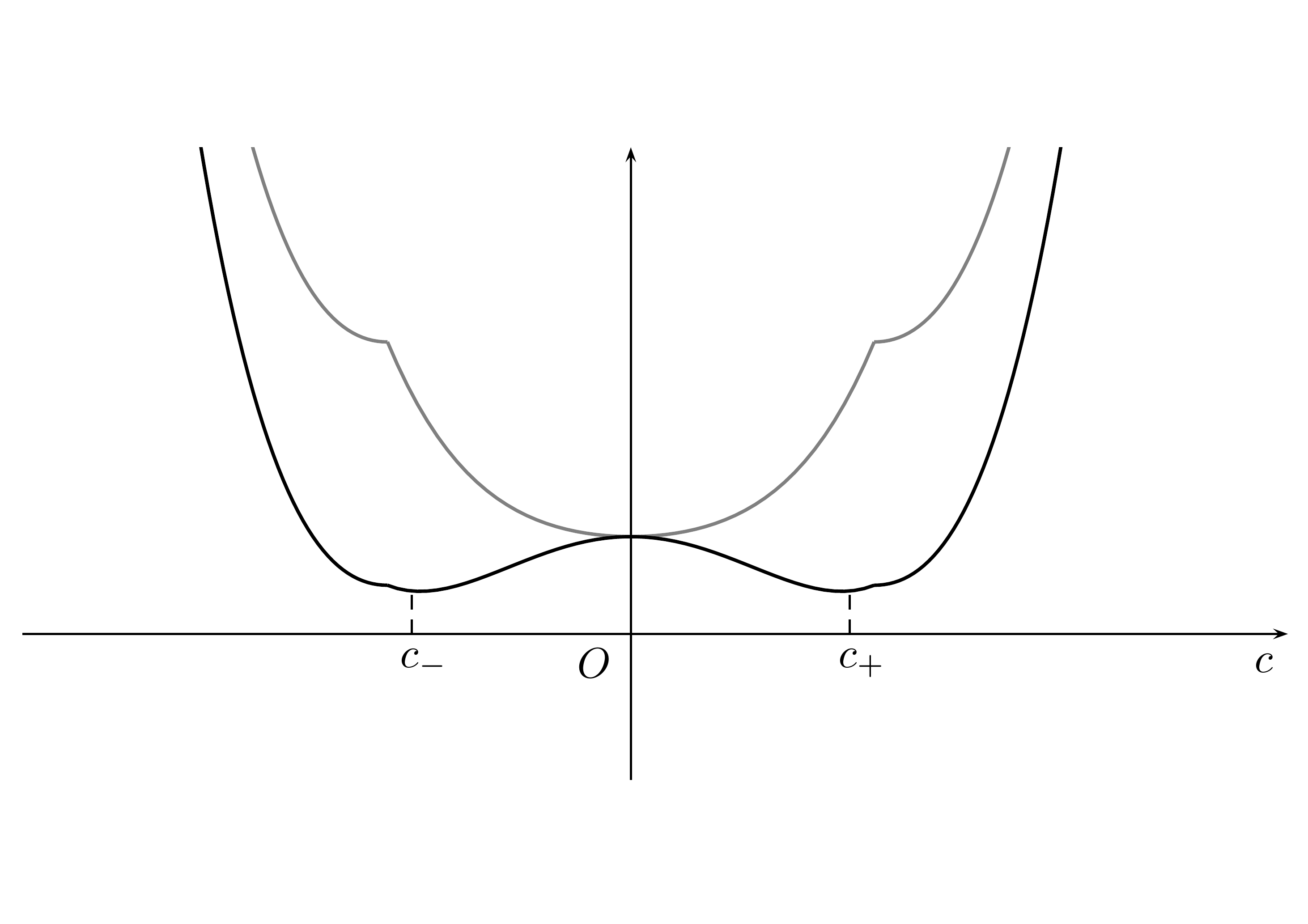}
\vskip -1cm
\caption{The plot of $W$ with $u > 4\theta_0$ (gray line) and $u < 4\theta_0$ (black line).}
\end{figure}

Finally, the mobility can be chosen as a positive function depending on $c
$. The dependence of mobility on the concentration is not new in literature:
it appeared for the first time in the original derivation of the
Cahn-Hilliard equation \cite{CH} and later other authors considered
different expressions for $M(c)$ (see for instance \cite{Barrett, elliott}).
\newline
Here the mobility $M(c)$ is taken in the form 
\begin{equation*}
M(c)=M_{0}(c^{2}-1)^{2}\ ,\qquad M_{0}>0,
\end{equation*}%
which implies that both $M$ and $\nabla M$ vanish at $c=-1,1$. Furthermore,
the mass density is such that 
\begin{equation*}
\rho (c)=\rho _{20}\ ,\quad c<-1\ ,\qquad \rho (c)=\rho _{10}\ ,\quad c>-1\ .
\end{equation*}%
In such a way $\rho $ is extended to $\mathbb{R}$.

Now we consider the initial value problem 
\begin{equation}  \label{(6.5)}
\rho(c) \dot c = \nabla\cdot[M(c)\nabla \mu(c)] \qquad c(\mathbf{x},0)=c_0(%
\mathbf{x}) \qquad \mathbf{x} \in \Omega
\end{equation}

\begin{theorem}
Let $c_0(\mathbf{x})\in [-1,1]$ for each $\mathbf{x} \in \Omega$ and assume that
\begin{equation}\label{ass}
M(c)\nabla \mu(c) \cdot \nabla \mu(c) +\gamma \,\rho(c) \nabla c\cdot \mathbf{D}\nabla c\,\geq 0.
\footnote{This inequality means that if $\mathbf{D}$ is negative definite, it is not too large relative to the first term.}
\end{equation}
Then, the solution $c(\mathbf{x},t)$ to \eqref{(6.5)} takes value in $[-1,1]$ a.e $\mathbf{x} \in \Omega$ and for each $t \in \mathbb{R}^+$.
\end{theorem}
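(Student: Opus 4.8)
The hypothesis \eqref{ass} is tailor‑made for the energy identity obtained by testing \eqref{(6.5)} with the chemical potential $\mu$, but such an identity gives only an $H^1/L^4$ bound on $c$, not the pointwise confinement to $[-1,1]$. For the latter I would add an Elliott--Garcke‑type entropy estimate exploiting the degeneracy of the mobility $M(c)=M_0(c^2-1)^2$ at $c=\pm1$. Throughout I use that, since $\dot\rho=-\rho\nabla\cdot\bv$ and $\bv=\mathbf 0$ on $\partial\Omega$, one has the transport identity $\frac{d}{dt}\int_\Omega\rho\,\phi\,dx=\int_\Omega\rho\,\dot\phi\,dx$ for every $\phi$ — this is what lets the convective part of $\dot c=\partial_tc+\bv\cdot\nabla c$ be absorbed cleanly.

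\textbf{Step 1 (basic energy estimate).} Multiplying \eqref{(6.5)} by $\mu$, integrating over $\Omega$, using $\nabla\mu\cdot\bn=0$ and $\nabla c\cdot\bn=0$ on $\partial\Omega$, the identity $\nabla\dot c\cdot\nabla c=\dot{\overline{\nabla c}}\cdot\nabla c+\nabla c\cdot\mathbf D\nabla c$, the transport identity above, and the expression \eqref{mu} of $\mu$, one gets
$$\frac{d}{dt}\int_\Omega\rho\Big(\tfrac{\gamma}{2}|\nabla c|^2+\theta_0F(c)+uG(c)\Big)dx=-\int_\Omega\big(M(c)|\nabla\mu|^2+\gamma\rho\,\nabla c\cdot\mathbf D\nabla c\big)dx+\int_\Omega\rho\,\dot u\,G(c)\,dx .$$
By \eqref{ass} the first integral on the right is nonnegative, and since $0\le G\le\tfrac12$ the last one is bounded by $\tfrac12\|\dot u\|_{L^1(\Omega)}$; hence, granting enough regularity of $\theta$ and $\bq$, $\int_\Omega\rho\big(\tfrac{\gamma}{2}|\nabla c|^2+\theta_0F(c)\big)dx$ stays bounded on every $[0,T]$, which — $F(c)=(c^2-1)^2$ being coercive of fourth order — yields a priori bounds on $\|\nabla c(t)\|_{L^2(\Omega)}$ and $\|c(t)\|_{L^4(\Omega)}$.

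\textbf{Step 2 (entropy estimate) and conclusion.} Introduce the convex density $\Phi$ with $\Phi''(s)=1/M(s)$ on $(-1,1)$, $\Phi(0)=\Phi'(0)=0$, so $\Phi(s)\to+\infty$ as $s\to\pm1$, extended by $\Phi=+\infty$ on $\{|s|\ge1\}$, and work with the regularization $\Phi_\varepsilon$, $\Phi_\varepsilon''=1/(M+\varepsilon)$. Testing \eqref{(6.5)} with $\Phi'_\varepsilon(c)$ gives $\frac{d}{dt}\int_\Omega\rho\,\Phi_\varepsilon(c)\,dx$ on the left (transport identity), and, after integrating by parts and inserting \eqref{mu}, on the right the dominant dissipative term $-\gamma\int_\Omega\tfrac{M}{M+\varepsilon}(\Delta c)^2\,dx\le0$, corrections produced by the $c$‑dependence of $\rho$ and of $\tfrac{M}{M+\varepsilon}$, a spinodal term $-\theta_0\int_\Omega\tfrac{M}{M+\varepsilon}F_{cc}(c)|\nabla c|^2\,dx$, and $-\int_\Omega\tfrac{M}{M+\varepsilon}\big(u\,G_{cc}(c)|\nabla c|^2+G_c(c)\,\nabla u\cdot\nabla c\big)\,dx$. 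The atoms of $G_{cc}$ at $c=\pm1$ carry the favourable sign, $-F_{cc}=4-12c^2\le 4$, and $|G_c|\le1$, so every non‑dissipative term is dominated by $C\int_\Omega\big(|\nabla c|^2+|\nabla u|^2\big)dx$; with the $L^2$‑bound on $\nabla c$ from Step 1 and the regularity of $\theta,\bq$, Gronwall's lemma gives $\int_\Omega\rho\,\Phi_\varepsilon(c(t))\,dx\le C(T)$ uniformly in $\varepsilon$. Letting $\varepsilon\to0$ by monotone convergence, $\int_\Omega\rho\,\Phi(c(t))\,dx\le C(T)<+\infty$, and since $\Phi$ is finite exactly on $(-1,1)$ this forces $c(\mathbf x,t)\in[-1,1]$ for a.e.\ $\mathbf x\in\Omega$ and every $t\le T$; $T$ is arbitrary. (At $t=0$ one needs $\int_\Omega\rho\,\Phi(c_0)\,dx<+\infty$, valid once $|\{\,|c_0|=1\,\}|=0$; the case $c_0\in[-1,1]$ is then recovered by approximating $c_0$.)

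\textbf{Main obstacle.} The delicate point is Step 2. Unlike the incompressible Cahn--Hilliard equation, here $\rho$ itself depends on $c$, so $\tfrac1\rho\nabla\cdot(\rho\nabla c)=\Delta c+\tfrac{\rho_c}{\rho}|\nabla c|^2$ and the entropy identity acquires extra terms involving $\nabla\rho$ and $\nabla\big(\tfrac{M}{M+\varepsilon}\big)$ which must be absorbed (typically by Young's inequality into the dissipative term plus $\|\nabla c\|_{L^4}^4$, the latter controlled via Step 1 and interpolation); the spinodal contribution $-\theta_0\int F_{cc}|\nabla c|^2$ and the cross term in $\nabla u$ have no sign and have to be handled through the Step 1 bounds and Gronwall; and the non‑smoothness of $G$ at $c=\pm1$ together with the limit $\Phi_\varepsilon\to\Phi$ require care. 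All of this rests on having enough regularity of the full solution $(\bv,c,\theta,\bq)$ — in particular $\dot u\in L^1$ and $\nabla u\in L^2$ in space–time — which is the standing assumption underlying both steps.
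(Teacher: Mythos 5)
Your Step 1 is essentially the paper's energy identity and is fine, but it only yields $H^1/L^4$ bounds, as you note; the whole burden of the theorem falls on Step 2, and there the argument has genuine gaps rather than merely technical loose ends. (i) The key absorption is not justified and, as written, does not close: testing with $\Phi_\varepsilon'(c)$ produces the correction $-\gamma\int_\Omega \Delta c\,\nabla\bigl(\tfrac{M}{M+\varepsilon}\bigr)\cdot\nabla c\,dx$ whose coefficient $\tfrac{\varepsilon M_c}{(M+\varepsilon)^2}$ is of order $\varepsilon^{-1/2}$ on the transition set $\{M\sim\varepsilon\}$ (with $M=M_0(c^2-1)^2$ one has $|M_c|\sim\sqrt{M_0\varepsilon}$ there), while the dissipative term you want to absorb it into is itself weighted by the degenerate factor $\tfrac{M}{M+\varepsilon}$; Young's inequality then leaves a remainder $\sim\varepsilon^{-1}|\nabla c|^4$ concentrated near the level sets $c=\pm1$, which is not controlled by the Step 1 bounds, so no uniform-in-$\varepsilon$ entropy bound follows as claimed. (ii) The sign claim for the singular part of $G_{cc}$ is wrong: since $G_c=c$ on $(-1,1)$ and $0$ outside, $G_{cc}=\chi_{(-1,1)}-\delta_{-1}-\delta_{1}$, so the atoms enter the right-hand side as $+\!\int\tfrac{M}{M+\varepsilon}\,u\,|\nabla c|^2\,d(\delta_{\pm1}\!\circ c)$ with $u=\theta+|\mathbf{q}|^2/\kappa_0>0$, i.e.\ with the unfavourable sign (besides being measure-theoretically delicate). (iii) The theorem allows $|c_0|=1$ on a set of positive measure (pure phases), for which $\int_\Omega\rho\,\Phi(c_0)\,dx=+\infty$; your fix "approximate $c_0$" presupposes continuous dependence of solutions of this coupled, degenerate system on the initial datum, which is not available and is itself far harder than the statement to be proved. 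Finally, the whole scheme rests on regularity hypotheses ($\Delta c\in L^2$, $\dot u\in L^1$, $\nabla u\in L^2$, etc.) that are not part of the theorem.

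For comparison, the paper's proof avoids all of this by an elementary truncation argument that exploits precisely how $F$, $G$, $M$, $\rho$ were extended outside $[-1,1]$: one multiplies the equation by $\mu(c_-)$, where $c_-=\min(c,-1)$ (and then by $\mu(c_+)$). Since $G(c_-)$ is constant, the temperature and heat-flux terms drop out of $\mu(c_-)$; since $M(c)\nabla\mu(c_-)$ is supported where $c=c_-$, the right-hand side becomes $-\int_\Omega M(c_-)|\nabla\mu(c_-)|^2dv$; and the left-hand side is the time derivative of $\int_\Omega\rho\bigl[\theta_0F(c_-)+\tfrac{\gamma}{2}|\nabla c_-|^2\bigr]dv$ plus the convective term $\gamma\int_\Omega\rho\,\nabla c_-\cdot\mathbf{D}\nabla c_-\,dv$, which is exactly what hypothesis \eqref{ass} controls. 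This functional is nonnegative, vanishes at $t=0$ because $c_0\in[-1,1]$, and is non-increasing, hence identically zero, forcing $F(c_-)=0$, i.e.\ $c\geq-1$ (and symmetrically $c\leq1$). If you want to salvage your route, you would need to carry out the entropy estimate rigorously (including the $\varepsilon$-uniform absorption and the treatment of $G$'s non-smoothness) and remove the restriction $|\{|c_0|=1\}|=0$; as it stands, the proposal does not prove the stated theorem.
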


\begin{proof}
First we prove that $c \leq 1$ a.e. $\mathbf{x} \in \Omega$ and for any $t\in \mathbb{R}^+$. We define
\begin{equation*}
c_{-}=\left\{
\begin{array}{ll}
-1 \quad & c \geq -1 \\[0.4em] 
\, c \quad & c < -1 \ .%
\end{array}
\right.
\end{equation*}
The definitions of $G$ and $c_-$ guarantee that $G(c_-)$ is a constant function, that is $G(c_-)=\frac12$, and hence
\begin{equation*}
G_{c_{-}}(c_{-})=0,
\end{equation*}
for all $c \in \mathbb{R}$.
Now, we multiply equation \eqref{(6.5)} by $\mu(c_-)$ and we integrate over $\Omega$. The divergence theorem and the boundary condition $\nabla\mu \cdot {\bf n} _{|\partial\Omega}=0$ yield 
\begin{equation}  \label{(6.7)}
\int_{\Omega}\rho\left( c\right)\dot{c}\mu \left( c_{-}\right)
dv=-\int_{\Omega }M(c)\nabla \mu \left( c\right) \cdot \nabla \mu \left(
c_{-}\right) dv.
\end{equation}
We focus our attention on the left--hand side of \eqref{(6.7)}. 
Accounting for $\nabla c_{-} \cdot \mathbf{n}=0$ at $\partial \Omega$, we obtain 
\begin{eqnarray*}
& &\int_{\Omega }\rho\left( c\right) \,\dot{c}\mu \left( c_{-}\right) dv
=
\int_{\Omega }\rho\left( c\right) \,\dot{c} 	\Big[ \theta _{0} F_{c_{-}}\left( c_{-}\right) - \frac{\gamma }{\rho( c_{-}) }\nabla \cdot \left( \rho(c_{-}) \nabla \,c_{-}\right) \Big]dv
\\
&=&
\theta _{0}\int_{\Omega }\rho \,\left( c\right) \,\dot{c}F_{c_{-}}\left(c_{-}\right) dv
+
\gamma \int_{\Omega }\rho\left( c_{-}\right) \nabla \,c_{-} \cdot  \nabla \left[
\frac{\rho\left( c\right) }{\rho \left( c_{-}\right) }\,\dot{c}\right]
dv.
\end{eqnarray*}
Following \cite{FGM}, we deduce the equality
\begin{equation*}
\begin{split}
&\int_{\Omega }\rho \left( c\right) \dot{c}\mu \left( c_{-}\right) dv
=
\\
&=
\frac{d}{dt}\int_{\Omega }\rho\left(c\right) \left[ \theta _{0}F\left(
c_{-}\right) +\frac{\gamma }{2}\left| \nabla \,c_{-}\right| ^{2}\right] dv
+
\gamma \int_{\Omega}\rho\left( c\right) \nabla \,c_{-}\cdot \mathbf{D}\nabla c_{-}\,dv.
\end{split}
\end{equation*}
Now we look at the right-hand side of \eqref{(6.7)}. Since $M(c)\nabla \mu \left( c_{-}\right)$ vanishes when $c \geq -1$, we can write 
\begin{equation*}
\int_{\Omega }M(c)\nabla \mu \left( c\right) \cdot \nabla \mu \left(
c_{-}\right) dv
= \int_{\Omega } M(c_{-}) |\nabla \mu \,\left( c_{-}\right)|^2 dv.
\end{equation*}
Collecting all the results we obtain 
\begin{equation*}
\begin{split}
&\frac{d}{dt}\int_{\Omega }\rho(c) \left[ \theta _{0}F\left(
c_{-}\right) +\frac{\gamma }{2}\left| \nabla \,c_{-}\right| ^{2}\right] dv
=
\\
&=
-\int_{\Omega }\left[ M(c_{-}) |\nabla \mu \,\left( c_{-}\right)|^2 +\gamma \,\rho\left( c\right) \nabla \,c_{-}\cdot 
\mathbf{D}\nabla \,c_{-} \right] dv.
\end{split}
\end{equation*}
Assumption \eqref{ass} allows us to conclude that
\begin{equation*}
\frac{d}{dt}\int_{\Omega }\rho(c) \left[ \theta _{0}F\left(
c_{-}\right) +\frac{\gamma }{2}\left| \nabla c_{-}\right| ^{2}\right]
dv\leq 0.
\end{equation*}
Since $|c_0({\bf x})|\leq 1$, we have $c_-({\bf x},0) =-1$ and hence
$F(c_{-}(\mathbf{x}, 0))=0$, $\nabla c_{-}(\mathbf{x}, 0)=0$. Thus, an integration over $t \in [0,T]$ yields 
\begin{equation*}
\int_{\Omega }\rho(c) \left[ \theta _{0}F\left( c_{-}\right) +
\frac{\gamma }{2}\left| \nabla c_{-}\right| ^{2}\right] \mid _{t=T}dv\leq
0 \qquad T\in \mathbb{R}^+
\end{equation*}
which implies that $F(c_{-}(\mathbf{x}, T))=0$, $\nabla c_{-}(\mathbf{x}, T)=0$.
Since $c_{-} \leq -1$ and F is non-negative and vanishes only at $-1$, $1$
then it follows that $c_{-}(\mathbf{x}, T) = -1$, namely 
\begin{equation*}
c(\mathbf{x}, T) \geq -1.
\end{equation*}

One can easily show that $c(\mathbf{x}, T) \leq 1$ by defining 
\begin{equation*}
c_{+}=\left\{ 
\begin{array}{ll}
1 \quad & c \leq 1 \\[0.4em] 
c \quad & c > 1
\end{array}
\right.
\end{equation*}
and repeating step by step the procedure adopted for $c_{-}$.

In conclusion 
\begin{equation*}
c(\mathbf{x},T)\in [-1,1] \quad \mathbf{x}\in \Omega, T \in \mathbb{R}^+
\end{equation*}
and the theorem is proved.
\end{proof}


\subsection*{Acknowledgment}
The first two authors have been partially supported by G.N.F.M. - I.N.D.A.M. through the projects for young researchers ``Mathematical models for multiphase materials''.

\end{document}